\theoremstyle{plain}
\newtheorem{corollary}{Corollary}
\newtheorem{lemma}[corollary]{Lemma}
\newtheorem{theorem}[corollary]{Theorem}
\theoremstyle{definition}
\newtheorem{definition}[corollary]{Definition}
\newtheorem{example}[corollary]{Example}
\DeclareMathAlphabet{\mathcal}{OMS}{cmsy}{m}{n}
\title{Kruskal's Tree Theorem for Acyclic Term Graphs%
  \footnote{This work was partially supported by FWF (Austrian Science
    Fund) project P\ 25781-N15.}%
}
\author{Georg Moser
  \institute{Universität Innsbruck, Austria}
  \email{\href{mailto:georg.moser@uibk.ac.at}{georg.moser@uibk.ac.at}}
\and
Maria A. Schett
  \institute{Universität Innsbruck, Austria}
  \email{\href{mailto:maria.schett@uibk.ac.at}{maria.schett@uibk.ac.at}}
}
\begin{document}

\hypersetup
{
    pdfauthor={Georg Moser, Maria Anna Schett},
    pdfsubject={Extended Abstract for TERMGRAPH 2016},
    pdftitle={Kruskal's Tree Theorem for Acyclic Term Graphs},
    pdfkeywords={Term Graph Rewriting, Termination, Kruskal's Tree Theorem}
}


\maketitle


\newcommand*{\cmt}[1]{\textcolor{White}{\colorbox{PineGreen}{\texttt{#1}}}}

\newcommand{\MS}[1]{{\leavevmode\color{Periwinkle}{#1}}}
\newcommand{\GM}[1]{{\leavevmode\color{Emerald}{#1}}}

\newcommand*{\df}{\mathrel{\mathrel{\mathop:}=}}

\newcommand*{\F}{\mathcal{F}}

\newcommand*{\V}{\mathcal{V}}

\newcommand*{\ar}{\mathsf{arity}}

\newcommand*{\asq}{\mathbf{a}}

\newcommand*{\mbsq}{\mathbf{T}}

\newcommand*{\argsq}{\mathbf{H}} 

\newcommand*{\topf}{\mathbf{f}}  

\newcommand*{\wqo}{\leqslant}

\newcommand*{\simpo}{\prec}

\newcommand*{\domein}{\mathrel{\mid}}

\newcommand*{\pr}{\text{'}}

\newcommand*{\grw}{\to_{\mathcal{G}}}

\newcommand*{\lex}{<_{\mathsf{lex}}}

\newcommand*{\lof}{\ll}

\newcommand*{\rd}{\leftarrow}

\newcommand*{\un}{\oplus}

\newcommand*{\lpo}{<_{\mathsf{lpo}}}
\newcommand*{\lpoeq}{\leqslant_{\mathsf{lpo}}}

\newcommand*{\toplex}{\topemb_{\mathsf{lex}}}
\newcommand*{\lexeq}{=_{\mathsf{lex}}}


\newcommand*{\ndes}{\mathcal{N}}

\newcommand*{\succs}{\mathsf{succ}}

\newcommand*{\succi}[1]{\overset{#1}{\rightharpoonup}}

\newcommand*{\lb}{\mathsf{label}}

\newcommand*{\rt}{\mathsf{root}}

\newcommand*{\inlets}{\mathsf{inlets}}

\newcommand*{\sub}{{\upharpoonright}}

\newcommand*{\ctg}{\vartriangle}

\newcommand*{\TG}{\mathcal{TG}}

\newcommand*{\G}{\mathcal{G}}

\newcommand{\Pos}{\mathsf{Pos}}

\newcommand*{\fleq}{\mathrel{\trianglerighteq}}

\newcommand*{\ufleq}{\mathrel{\trianglelefteq}}


\newcommand*{\iso}{\cong}

\newcommand*{\Top}{\mathsf{Top}}
\newcommand*{\Tops}{\mathsf{Tops}}

\newcommand*{\topemb}{\sqsubset}
\newcommand*{\topembeq}{\sqsubseteq}

\newcommand*{\stopemb}{\sqsupset}
\newcommand*{\stopembeq}{\sqsupseteq}

\newcommand*{\ilab}{\ctg}

\newcommand*{\emb}{\sqsubset_{\mathsf{emb}}}
\newcommand*{\embeq}{\sqsubseteq_{\mathsf{emb}}}

\newcommand*{\semb}{\sqsupset_{\mathsf{emb}}}
\newcommand*{\sembeq}{\sqsupseteq_{\mathsf{emb}}}

\newcommand*{\args}{\mathsf{arg}}

\newcommand*{\pembeq}{\embeq^\text{\cite{1997_plump}}}
\newcommand*{\psembeq}{\sembeq^\text{\cite{1997_plump}}}

\newcommand*{\mm}[2]{m(\cial{#1}) = \cial{#2}}


\newcommand*{\ci}[1]{\tikz{ 
  \node[circle, draw, inner sep=1.5pt, minimum size = 9pt] (char) {\tiny{#1}};}}

\newcommand*{\cial}[1]{\tikz [baseline=(char.base)] {
  \node[circle, draw, inner sep=1.5pt, minimum size = 9pt] (char) {\tiny{#1}};}}

\tikzstyle{n}=[rectangle, inner sep=2pt, outer sep=1pt, minimum size=12pt, node distance = 0.2cm and 0.2cm]

\newcommand{\nn}[2]{$#1\mathrel{:}\cial{#2}$}

\tikzstyle{nme}=[node distance = 0.5cm]

\tikzstyle{mo}=[|->, dashed]


\newcommand*{\fun}[1]{\mathsf{#1}}

\newcommand*{\f}{\fun{f}}
\newcommand*{\fa}{\fun{a}}
\newcommand*{\g}{\fun{g}}
\newcommand*{\fb}{\fun{b}}

\newcommand*{\femb}[3]{
    \node [n, xshift= #3 cm]   (1) {$\f$};
    \node [n, xshift=0.5cm, %
           below left = of 1]  (2) {$#1$};
    \node [n, xshift=-0.5cm, %
           below right = of 1] (3) {$#2$};
    \path[->] (1) edge (2)
              (1) edge (3)
    ;
}
\newcommand*{\fembs}[2]{
    \node [n, xshift= #2 cm]   (1) {$\f$};
    \node [n,  %
           below = of 1]  (2) {$#1$};
    \path[->] (1) edge [bend left]  (2)
              (1) edge [bend right] (2)
    ;
} 

\newcommand*{\tct}{\textsf{T\kern-0.2em\raisebox{-0.3em}C\kern-0.2emT}}
\newcommand*{\aprove}{\textsf{AProVE}}


%
\begin{abstract}
  In this paper we study termination of term graph rewriting, where we
  restrict our attention to \emph{acyclic} term graphs. Motivated by
  earlier work by Plump we aim at a definition of the notion of
  \emph{simplification order} for acyclic term graphs. For this we
  adapt the homeomorphic embedding relation to term graphs. In
  contrast to earlier extensions, our notion is inspired by
  morphisms. Based on this, we establish a variant of Kruskal's Tree
  Theorem formulated for acyclic term graphs.  In proof, we rely on
  the new notion of embedding and follow Nash-Williams' minimal bad
  sequence argument. Finally, we propose a variant of the
  \emph{lexicographic path order} for acyclic term graphs.
\end{abstract}

\section{Introduction}

It is well-known that term graph rewriting is \emph{adequate} for term
rewriting.  However, this requires suitable care in the treatment of
sharing, typically achieved by extending the term graph rewrite
relation with \emph{sharing} (aka \emph{folding}) steps and
\emph{unsharing} (aka \emph{unfolding}) steps,
cf.~\cite{1998_plump,TeReSeC13}. If one focuses on term graph
rewriting alone, then it is well-known that termination of a given
graph rewrite system does not imply termination of the corresponding
term rewrite system~\cite{1997_plump}.  This follows as the
representation of a term as a graph enables us to share equal
subterms. However, if we do not provide the possibility to unshare
equal subterms, we change the potential rewrite steps.  Then not every
term rewrite step can be simulated by a graph rewrite step. This
motivates our interest in termination techniques \emph{directly} for
term graph rewriting.
More generally our motivation to study term graph rewriting stems from
ongoing work on complexity or termination analysis of programs based
on transformation to term rewrite systems (see
e.g.~\cite{GRSST:11,BMOG12,SGBFFHS14,ADM:2015,AM:16}). In
particular in work on termination of imperative programs
(see~e.g.~\cite{SGBFFHS14}) these works require a term representation
of the heap, which would be much more naturally be encoded as term
dags (see the definition below). However, complexity and termination
analysis of term graph rewrite systems have only recently be conceived
attention in the
literature~\cite{2013_bonfante_et_al,BKZ:14,2015_bruggink_et_al,AM:2016}. In
particular, at the moment there are no automated tools, which would
allow an application for program analysis and could be compared to
existing approaches using either \aprove~\cite{GBEFFOPSSST14} or
\tct~\cite{AMS:16}.

In our definition of term graph rewriting we essentially follow
Barendsen~\cite{TeReSeC13}, but
also~\cite{1987_barendregt_et_al,2013_avanzini}, which are
notationally closest to our presentation. 
We restrict our attention to term graphs, which represent such
(finite) terms, that is in our context term graphs are directed,
\emph{rooted}, and \emph{acyclic} graphs with node labels over a set
of function symbols and variables.
In term rewriting, termination is typically established via
compatibility with a reduction order. Well-foundedness of such an
order is more often than not a consequence of Kruskal's Tree
Theorem~\cite{1960_kruskal} (e.g. in~\cite{1982_dershowitz}). In
particular, Kruskal's Tree Theorem underlies the concept of simple
termination (see e.g.~\cite{1997_middeldorp_et_al}).  Indeed,
Plump~\cite{1997_plump} defines a simplification order for acyclic
term graphs. This order relies on the notion of \emph{tops}. The top
of a term graph is its root \emph{and} its direct successors---thus
keeping information on how these successors are shared. 

We recall briefly. Let $\wqo$ be a partial order.  If for any infinite
sequence, we can find two elements $a_i, a_j$ with $i < j$ where
$a_i \wqo a_j$, then $\wqo$ is a well-quasi order. Now, Kruskal's Tree
Theorem states, in a formulation suited to our needs, that given a
well-quasi order $\topembeq$ on the symbols in a term, the
\emph{homeomorphic embedding} relation $\embeq$ is a well-quasi order
$\embeq$ on terms.  We consider term graphs, not terms, and our
symbols are tops.  Usually, the relation $\embeq$ is simply called an
\emph{embedding}.

Plump~\cite{1997_plump} defines the embedding $\pembeq$, but as he
notes, for the following two term graphs, his definition of $\pembeq$
holds in both directions.
\begin{center}
  \pdftooltip{
    \begin{tikzpicture}
      \node [n]    (1) {$\f$};
      \node [n, xshift=0.5cm, %
             below left = of 1]  (2) {$\g$};
      \node [n, xshift=0.25cm,
             below = of 2]       (4) {$\fa$};
      \path[->] (1) edge (2)
                (1) edge [bend left] (4)
                (2) edge (4)
      ;
      \node [xshift=1.3cm, yshift=-0.4cm] (E) {$\pembeq$};
      \node [n,xshift = 2.5cm]   (1) {$\f$};
      \node [n, xshift=0.5cm, %
             below left = of 1]  (2) {$\g$};
      \node [n,
             below = of 2]       (4) {$\fa$};
      \node [n, xshift=-0.5cm, %
             below right = of 1] (3) {$\fa$};
      \path[->] (1) edge (2)
                (1) edge (3)
                (2) edge (4)
      ;
      \node [xshift=4.5cm, yshift=-0.4cm] (E) {
        \begin{tabular}{l}
          but also
        \end{tabular} 
      };
      \node [n, xshift = 6.5cm]  (1) {$\f$};
      \node [n, xshift=0.5cm, %
             below left = of 1]  (2) {$\g$};
      \node [n,
             below = of 2]       (4) {$\fa$};
      \node [n, xshift=-0.5cm, %
             below right = of 1] (3) {$\fa$};
      \path[->] (1) edge (2)
                (1) edge (3)
                (2) edge (4)
      ;
      \node [xshift=7.8cm, yshift=-0.4cm] (E) {$\pembeq$};
      \node [n, xshift = 9.2cm]    (1) {$\f$};
      \node [n, xshift=0.5cm, %
             below left = of 1]  (2) {$\g$};
      \node [n, xshift=0.25cm,
             below = of 2]       (4) {$\fa$};
      \path[->] (1) edge (2)
                (1) edge [bend left] (4)
                (2) edge (4)
      ;
    \end{tikzpicture}
  }
  { The term f(g(a),a) can be represented in two ways: once with the a
    shared, and once where it is not shared. They are, however,
    mutually embedded with respect to the embedding relation defined
    in \cite{1997_plump}. }
\end{center}
In particular, \cite{1997_plump} does not take sharing into
account---except for direct successors through tops. This is a
consequence of identifying each sub-graph independently.
This is the inspiration and starting point for our work: We want to
define an embedding relation, which also takes sharing into
account. With this new embedding relation we re-prove Kruskal's Tree
Theorem. Also here we take a slightly different approach to
\cite{1997_plump}, which relies on an encoding of tops to function
symbols with different arities. It is stated that there is a direct
proof based on~\cite{1963_nash-williams}, which will be our direction.

As already mentioned, the context of this paper is the quest for
termination techniques for term graph rewriting.  Here
\emph{termination} refers to the well-foundedness of the graph rewrite
relation $\grw$, induced by a graph rewrite system $\mathcal{G}$,
cf.~\cite{TeReSeC13}.  In particular, we seek a technique based on
orders.  This is in contrast to related work in the literature. There
termination is typically obtained through interpretations or weights,
cf.~Bonfante et al.~\cite{2013_bonfante_et_al}. Also Bruggink et
al.~\cite{BKZ:14,2015_bruggink_et_al} use an interpretation method,
where they use type graphs to assign weights to graphs to prove
termination. Finally, in~\cite{AM:2016} complexity of acyclic term
graph rewriting is investigated, based on the use of interpretations
and suitable adaptions of the dependency pair framework.

This paper is structured as follows. The next section provides basic
definitions.  In Section~\ref{Embedding} we discuss potential
adaptions of the homeomorphic embedding relation to term graphs and
establish a suitable notion that extends the notion of \emph{collapse}
known from the literature. Section~\ref{Kruskal} establishes our
generalisation of Kruskal's Tree Theorem to acyclic term graphs.  In
Section~\ref{Simplification:Orders} we establish a new notion of
simplification orders. Finally, in Section~\ref{Conclusion} we
conclude and mention future work.

\section{Preliminaries}
First, we introduce our flavour of term graphs based on term dags,
define term graph rewriting in our context, and give the collapse
relation. Then we investigate tops with respect to a function symbol
but also with respect to a node in a term graph. Based on this, we
will consider a precedence on tops.
\begin{definition} 
  Let $\ndes$ be a set of nodes, $\F$ a set of function symbols, and
  $\V$ a set of variables.
  A \emph{graph} is $G = (N, \succs, \lb)$, where
  $N \subseteq \ndes $, $\succs: N \to N^*$, and
  $\lb: N \to {\F \cup \V}$. Here, $\succs$ maps a node $n$ to an
  ordered list of \emph{successors} $[n_1 \ldots n_k]$. Further, $\lb$
  assigns labels, where
  \begin{enumerate*}[label=\itshape(\roman*)]
    \item for every node $n \in G$ with $\lb(n) = f \in \F$ we have
      $\succs(n) = [n_1, \dots , n_{\ar(f)}]$, and
    \item for every $n \in G$ with $\lb(n) \in \V$, we have
      $\succs(n) = [~]$.
  \end{enumerate*}  
  If $G$ is acyclic, then $G$ is a \emph{term dag}.
\end{definition}
The \emph{size} of a graph $|G|$ is the number of its nodes $N$. We
write $n \in G$ and mean $n \in N$, and call $G$ \emph{ground}, if
$\lb: N \to \F$.
If $\succs(n) = [\ldots, n_i, \ldots]$, we write $n \succi{i} n_i$, or
simply $n \succi{} n_i$ for any $i$.  Further,~$\succi{}^+$ is the
transitive, and $\succi{}^*$ the reflexive, transitive closure of $\succi{}$.
If $n \succi{}^* n\pr$, then $n\pr$ is \emph{reachable} from $n$. In
the sub-graph $G \sub [n_1, \ldots, n_k]$ all nodes reachable from
$n_1, \ldots, n_k$ are collected, i.e.\
$N = \{ n \mid n_i \succi{}^* n, 1 \leqslant i \leqslant k \}$, and
the domains of $\succs$ and $\lb$ are restricted accordingly.

\begin{definition}
  Let $T$ be a term dag.  If all nodes are reachable from one node
  called $\rt(T)$, that is, $T$ is \emph{rooted}, then $T$ is a
  \emph{term graph} with $\inlets \df \succs(\rt(T))$. For a term dag
  $G$ with $\inlets = [t_1, \ldots, t_n]$, the \emph{argument graph}
  is defined as $G \sub \inlets\pr$, where
  $\inlets\pr \df \succs(t_1) \cdots \succs(t_l)$.
\end{definition}

\noindent
\begin{minipage}[t]{.85\linewidth}
  \begin{example} \label{ex:termgraph} %
    On the right we show the term graph
    $T = (\{\cial{1}, \cial{2}\}, \succs, \lb)$, with
    $\succs: \cial{1} \mapsto [\cial{2}, \cial{2}], %
    \cial{2} \mapsto [~]$, and $\lb: \cial{1} \mapsto \f, %
    \cial{2} \mapsto \fa$.
    The term representation of $T$ is $\f(\fa,\fa)$, $|T| = 2$, and
    $T$ is ground. The argument graph of $T$ is ~ \nn{\fa}{2} ~ with
    $\inlets = [\cial{2}, \cial{2}]$.
  \end{example}
\end{minipage} \hfill
\begin{minipage}[t]{.1\linewidth} 
  \pdftooltip{
    \begin{tikzpicture}[anchor=base, baseline=10pt]
      \node [n]               (1) {\nn{\f}{1}};
      \node [n, below = of 1] (2) {\nn{\fa}{2}};
      \path[->] (1) edge [bend left]  (2)
                (1) edge [bend right] (2)
                ;
    \end{tikzpicture}
  }
  {A term graph representing the term f(a,a). Here the "a" is shared. The
   root node with label "f" has the node number 1. The argument node
   with label "a" has the node number 2. There are two edges from node 1
   to node 2.}
\end{minipage}
\newline

A \emph{graph rewrite rule} is a term dag $G$ with a root node~$l$
of the left hand side, and a root node~$r$ of right hand side. We
denote a graph rewrite rule by $L \to R$, where $G \sub [l] = L $ and $G \sub [r] =
R$. For a graph rewrite rule the following has to hold:
\begin{enumerate*}[label=\itshape(\roman*)]
  \item $\lb(l) \not\in \V$, 
  \item if $n \in R$ with $\lb(n) \in \V$ then $n \in L$, and
  \item for all nodes $n, n\pr \in G$, if $\lb(n) = \lb(n\pr) \in \V$
    then $n=n\pr$.
\end{enumerate*}
A \emph{graph rewrite system (GRS)} $\G$ is a set of graph rewrite
rules.
To define a \emph{graph rewrite step}, we first need the auxiliary
concepts of \emph{redirection} of edges and \emph{union} of two term dags.
%
To \emph{redirect} edges pointing from node $u$ to node $v$, we write
$G[v \rd u]$, which is defined as
$(N_G, \succs_{G[v \rd u]}, \lb_{G})$, where for all nodes $n \in G$,
$\succs^i_{G[v \rd u]}(n) \df v $ if $n = u$, and
$\succs^i_{G[v \rd u]}(n) \df n$ otherwise. Note, that for
$G[v \rd u]$ we still have $u \in G$.
%
For two term dags $G$ and $H$, their (left-biased) \emph{union},
denoted by $G \un H$, is defined as
$(N_G \cup N_H, \succs_{G} \un \succs_{H}, \lb_{G} \un \lb_{H}) $,
where for $f \in \{\succs, \lb\}$ we define
$ f_G \un f_H (n) \df f_G(n) $ if $n \in G$, and $f_H(n)$ if
$n \not\in G$ and $n \in H$.  Note, that we do not require
$N_G \cap N_H = \varnothing$.
Next we investigate how to determine whether a graph rewrite rule matches a
term graph. Therefore we first need to find a common structure between
two graphs---through a \emph{morphism}.
\begin{definition} \label{def:dmo}%
  Let $S, T$ be term graphs, and $\Delta \subseteq \F \cup \V$. A function $m: S \to T$ is
  \emph{morphic} if for a node $n \in S$
  \begin{enumerate}[label=\itshape(\roman*)]
  \item \label{dmo:lb} $\lb_S(n) = \lb_T(m(n))$ and
  \item \label{dmo:succs} if $n \succi{i}_{S} n_i $ then $m(n) \succi{i}_T m(n_i)$ for
    all appropriate $i$.
  \end{enumerate}
  A $\Delta$-\emph{morphism} from $S$ to $T$ is a mapping
  $m : S \to_\Delta T$, which is morphic in all nodes $n \in S$ with
  $\lb(n) \not\in \Delta$ and additionally $m(\rt(S)) = \rt(T)$ holds.
\end{definition}
A $\Delta$-morphism only enforces Conditions~\ref{dmo:lb} and
\ref{dmo:succs} on nodes with labels which are not in~$\Delta$. With
$\Delta = \V$ we can determine whether a left-hand side of a graph
rewrite rule \emph{matches} a term graph, i.e., $L$ matches $S$ if
there is a morphism $m : L \to_{\V} S$. Here, a node representing a
variable in~$L$ can be mapped 
to a node with any label and successors.
The morphism~$m$ is \emph{applied} to $R$, denoted by $m(R)$, by
redirecting all variable nodes in $R$ to their image. That is, for all
$n_1, \ldots, n_k \in R$, where $\lb(n_i) \in \V$, we define
$ m(R) = ((R \un S) [m(n_1) \rd v_1]) \ldots [m(n_k) \rd n_k]$.
Finally, for two term graphs $S, T$, $n$ a node in $S$, and
$N_S \cap N_T = \varnothing$, the replacement of the subgraph
$S\sub n$ by T, denoted $S[T]_n$, is defined as $T$, if $n = \rt(S)$,
and as $(S \un T)[\rt(T) \rd n] \sub \rt(S)$ otherwise.

\begin{definition}
  Let $\G$ be a GRS. A term graph $S$ \emph{rewrites} to a term graph
  $T$, denoted by $S \grw T$, if there is a graph
  rewrite rule $L \to R \in \G$ with $N_R \cap N_S = \varnothing$, and
  a morphism $m : L \to S\sub n$ such that $S[m(R)]_n = T$.
\end{definition}

Finally, we can introduce the notion of termination.
\begin{definition}
  If $\grw$ is well-founded, we say that the GRS $\G$ is
  \emph{terminating}.  
\end{definition}

So far, we have not taken sharing into
  account---which we will investigate next.
For term graphs $S$ and $T$, we may
ask: Is $S$ a ``more shared'' version of $T$? Are $S$ and $T$
``equal''? To answer this, 
we rely again on a morphism as in Definition~\ref{def:dmo}, where we
require Condition~\ref{dmo:lb} and~\ref{dmo:succs} for every node,
i.e.\ we set $\Delta = \varnothing$.
\begin{definition} \label{def:sharing}%
 If there is
  a morphism $m: S \to_\varnothing T$, then $S$ \emph{collapses}
  to $T$, denoted by
  $S \fleq T$.  If $S \fleq T$ and $T \fleq S$, then $S$ is
  \emph{isomorphic} to $T$, denoted by $S \iso T$.
\end{definition}
Reconsidering Example~\ref{ex:termgraph}, let $S$ be a tree
representation of $\f(\fa, \fa)$, then $S \fleq T$. Now recall, that
we aim to give a notion of $\Top$, which takes the sharing of
successor nodes into account, formalised via the collapse relation. 
Thus---with collapsing---we can give a
definition of $\Tops$ for a function symbol $f$.
\begin{definition} \label{def:tops} %
  Let $f \in \F$, $\ctg$ a fresh symbol wrt.\ $\F$, and $S$ a tree
  representation of $f(\ctg, \ldots, \ctg)$. Then
  $\Tops(f)= \{ T \mid T \text{ is a termgraph, and } S \fleq T \}$ and
  $\Tops(\F) = \bigcup_{f \in \F} \Tops(f)$.
\end{definition}
Now, similar to a precedence on function symbols, we define a
precedence $\topembeq$ on $\Tops(\F)$.
\begin{definition} \label{def:precedence} %
  A \emph{precedence} on $\F$ is a transitive relation $\topembeq$ on
  $\Tops(\F)$, where for $S, T \in \Tops(\F)$ we have
  \begin{enumerate*}[label=\itshape(\roman*)]
  \item \label{it:embTopRef} $S \iso T$ implies $S \topembeq T$ and
    $T \topembeq S$, and
  \item \label{it:embTopMo} $T \topembeq S$ implies $|T| \leqslant |S|$.
  \end{enumerate*}
\end{definition}
Condition~\ref{it:embTopRef} implies reflexivity, but also includes
isomorphic copies. Condition~\ref{it:embTopMo} hints at a major
distinction to the term rewriting setting: We can distinguish the same
function symbol with different degrees of sharing---and even embed
nodes, which are labelled with function symbols with a smaller arity,
in nodes labelled with function symbols with a larger arity. But, to
ensure that such an embedding is indeed possible, enough nodes have to
present---which is guaranteed by Condition~\ref{it:embTopMo}.
With Definition~\ref{def:tops} we can compute the $\Tops$ for a
function symbol---but we also want to compute the $\Top$ from some
node in a term dag.
\begin{definition} \label{def:top} %
  For a term dag $G = (N, \succs, \lb)$ and a node $n \in G$, we define
  $\Top_G(n) \df (\{n\} \cup \succs(n), \lb\pr, \succs\pr)$, where
  \begin{enumerate*}[label=\itshape(\roman*)]
  \item $\lb\pr(n) = \lb(n)$,
    $\succs\pr(n) = \succs(n)$, and
  \item for $n_i \in \succs(n)$, $ \lb\pr(n_i) = \ilab$, and
    $\succs\pr(n_i) = [~]$.
  \end{enumerate*} 
\end{definition}
For $\Top_G(n)$, where $\lb_G(n) = f$, we find an isomorphic copy $G\pr$
of $\Top_G(n)$ in $\Tops(f)$, i.e.
$\Top_G(n) \iso G\pr \in \Tops(f)$.

In the context of this work we focus on the graph rewrite relation
$\grw$ and not on a relation combined with any explicit collapsing relation
$\fleq$, as e.g., in~\cite{1997_plump}. In passing, we note that for
the below established notion of homeomorphic embedding a similar relation
to the collapse relation $\fleq$ is possible as in Plump's work, 
cf.~\cite[Lemma~24]{1997_plump}.

\section{On Embedding}
\label{Embedding}

Next we continually develop a suitable definition of \emph{homeomorphic embedding}
for term dags. To get an intuition for embedding of term graphs
consider the following example. 
\begin{figure}[t]
\begin{center}
  \pdftooltip{
    \begin{tikzpicture}
      \node [n]                  (1) {\nn{\f}{1}};
      \node [n, xshift=0.75cm, %
             below left = of 1]  (2) {\nn{\fa}{2}};
      \node [n, xshift=-0.75cm, %
             below right = of 1] (3) {\nn{\fa}{3}};
      \path[->] (1) edge (2)
                (1) edge (3)
      ;
      \node [xshift=1.5cm, yshift=-0.4cm] (E) {$\sembeq$};
      \node [n, xshift = 2.5cm] (4) {\nn{\g}{A}};
      \node [n, below = of 4] (5) {\nn{\fa}{B}};
      \path[->] (4) edge  (5)
      ;
      \node [xshift=3.5cm, yshift=-0.4cm] (E) {$\sembeq$};
      \node [n, xshift = 4.5cm] (4) {\nn{\f}{I}};
      \node [n, below = of 4] (5) {\nn{\fa}{II}};
      \path[->] (4) edge [bend left]  (5)
                (4) edge [bend right] (5)
      ;
      \node [xshift=6.25cm, yshift=-0.4cm] (E) {
        \begin{tabular}{l}
          with \\ precedence
        \end{tabular} 
      };
      \node [n, xshift=8cm]      (1) {$\f$};
      \node [n, xshift=0.5cm, %
             below left = of 1]  (2) {$\ilab$};
      \node [n, xshift=-0.5cm, %
             below right = of 1] (3) {$\ilab$};
      \path[->] (1) edge (2)
                (1) edge (3)
      ;
      \node [xshift=8.8cm, yshift=-0.4cm] (E) {$\stopembeq$};
      \node [n, xshift = 9.5cm] (4) {$\g$};
      \node [n, below = of 4] (5) {$\ilab$};
      \path[->] (4) edge (5);
      \node [xshift=10.2cm, yshift=-0.4cm] (E) {$\stopembeq$};
      \node [n, xshift = 11cm] (4) {$\f$};
      \node [n, below = of 4] (5) {$\ilab$};
      \path[->] (4) edge [bend left]  (5)
               (4) edge [bend right] (5)
      ;
    \end{tikzpicture}
  }
  { Shown are three term graphs and a precedence. The precedence is:
    the function symbol f with two distinct successors is larger than
    the function symbol g with one successor, which is larger than the
    function symbol f with two shared successors. This gives raise to
    the following embedding of term graphs: The term graph with root
    symbol f and two distinct successor nodes both labelled with
    function symbol a is larger than the term graph with root symbol g
    and one successor node labelled with the function symbol a. This
    term graph is larger than the term graph with root symbol f and a
    shared successor node also labelled with function symbol a.  
  }
\end{center}
\caption{Intuitive Embeddings}
\label{fig:ex:embedding}
\end{figure}
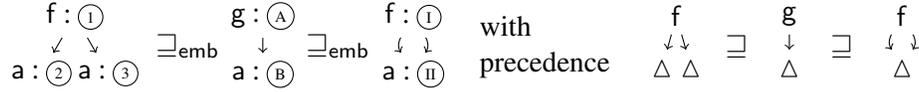

\begin{example} \label{ex:embedding} %
  In Figure~\ref{fig:ex:embedding}, we find three term graphs, which are
  intuitively embedded from left to right under the given precedence.
\end{example}

  We base our definition of embedding on morphisms. We evolve this
  definition to highlight difficulties and pitfalls. In the first
  attempt we try mapping nodes from the embedded to the embedding
  graph.
\begin{definition} [first attempt] 
\label{def:embeddingFstAttempt} 
  Let $\topembeq$ be a precedence. 
  We say that $S$ is \emph{embedded} in $T$, denoted as $S \embeq T$, if
  there exists a function $m \colon S \to T$, such
  that for all nodes $s \in S$, we have
  \begin{enumerate}[label=\itshape(\roman*)]
  \item $\Top_S(s) \topembeq \Top_T(m(s))$, and
  \item if $s \succi{}_S s\pr$ for some $s\pr \in S$, then
    $m(s) \succi{}_T^+ m(s\pr)$ holds.
  \end{enumerate}
\end{definition}

\begin{example}
  We illustrate this definition with Figure~\ref{fig:ex:embedding}. From the first to the second term
  dag we have a function $m$, with $m(\cial{A}) = \cial{1}$, and
  either $m(\cial{B}) = \cial{2}$ or $m(\cial{B}) = \cial{3}$. Here
  $m$ is not unique. From the second to the third term dag the
  morphism $m\pr$ maps $m\pr(\cial{I}) = \cial{A}$ and
  $m\pr(\cial{II}) = \cial{B}$.
\end{example}

In Definition~\ref{def:embeddingFstAttempt} the morphism maps nodes
from the embedded graph $S$ to nodes in the embedding graph $T$. The
following example shows a problem arising from this.

\begin{example} \label{ex:StoT} %
    The embedding given in Figure~\ref{fig:variants}(a) is valid after
    Definition~\ref{def:embeddingFstAttempt}. Here a morphism that
    satisfies both conditions is $\mm{A}{1}$, $\mm{B}{2}$,
    $\mm{C}{3}$, and also $\mm{D}{2}$ as well as $\mm{E}{3}$. This
    embedding could be prohibited by demanding $m$ to be injective.
\end{example}

Demanding injectivity in Definition~\ref{def:embeddingFstAttempt}
prohibits the embedding $S \embeq T$ if $S \ufleq T$ (in general).
Thus we attempt to expand our definition such that a term dag also
embeds a collapsed version of itself, i.e.\ embedding takes sharing
into account. To achieve this the embedding relation has to contain
the collapse relation of Definition~\ref{def:sharing}.  Then the
embedding relation relies on a \emph{partial} mapping from the
embedding term graph $S$ to $T$.
 
\begin{definition}[second attempt] \label{def:embeddingSndAttempt} %
  Let $\topembeq$ be a precedence.  We say that $S$ \emph{embeds} $T$,
  denoted as $S \sembeq T$, if there exists a partial, surjective
  function $m \colon S \to T$, such that for all nodes $s$ in the
  domain of~$m$, holds
  \newcounter{embenum}
  \begin{enumerate}[label=\itshape(\roman*)]
  \item
    $\Top_T(m(s)) \topembeq \Top_S(s)$, and
  \item %
    $m(s) \succi{}_T m(s\pr)$ implies 
    $ s \succi{}^{+}_S n\pr$ for some $ n\pr \in \{ n \mid (n) = m(s\pr) \}$. 
  \setcounter{embenum}{\value{enumi}}
  \end{enumerate}
\end{definition}
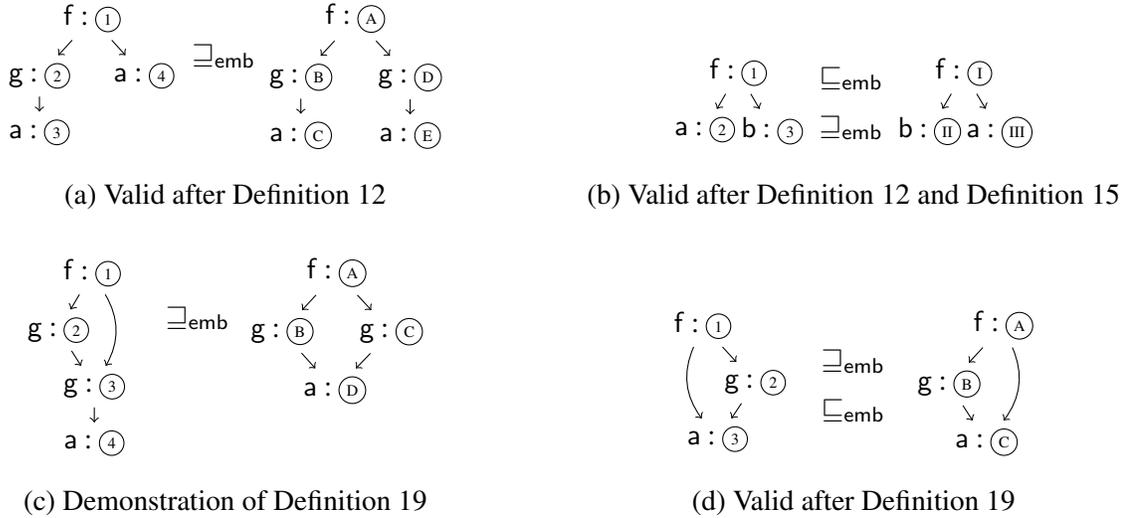
\begin{figure}[t]
  \centering
  \begin{tabular}{@{}c@{\hspace{.02\linewidth}}c@{}}
    \pdftooltip{ %
    \begin{minipage}[b]{.49\linewidth}
      \begin{center}
      \begin{tikzpicture}
        \node [n]                     (1) {\nn{\f}{1}};
        \node [n, xshift=0.5cm, %
               below left= of 1]     (2) {\nn{\g}{2}};
        \node [n, below       = of 2] (3) {\nn{\fa}{3}};
        \node [n, xshift=-0.5cm,
               below right = of 1]    (4) {\nn{\fa}{4}};
        \path[->] (1) edge (2)
                  (1) edge (4)
                  (2) edge (3)
                  ;
        \node [xshift=1.75cm, yshift=-0.5cm] (E) {$\sembeq$};
        \node [n, xshift= 3.5cm]      (1) {\nn{\f}{A}};
        \node [n, xshift=0.5cm, 
               below left  = of 1]    (2) {\nn{\g}{B}};
        \node [n, below       = of 2] (4) {\nn{\fa}{C}};
        \node [n, xshift=-0.5cm, 
               below right = of 1]    (3) {\nn{\g}{D}};
        \node [n, below       = of 3] (5) {\nn{\fa}{E}};
        \path[->] (1) edge (2)
                  (1) edge (3)
                  (2) edge (4)
                  (3) edge (5)
                  ;
      \end{tikzpicture} \\[1ex] %
      (a)~Valid after Definition~\ref{def:embeddingFstAttempt} %
      \end{center} %
     \end{minipage}
     }{ 
      Shown is the tree representation of $f:1 (g:2(a:3),a:4)$ and
      $f:A(g:B(a:C),g:D(a:E))$, where the former embeds the latter. 
      This is valid after the first attempt of the definition. %
     }
 & 
   \pdftooltip{
   \begin{minipage}[b]{.49\linewidth}
     \begin{center}
       \begin{tikzpicture}
         \node [n]                  (1) {\nn{\f}{1}};
         \node [n, xshift=0.75cm, %
         below left = of 1]  (2) {\nn{\fa}{2}};
         \node [n, xshift=-0.75cm, %
         below right = of 1] (3) {\nn{\fb}{3}};
         \path[->] (1) edge (2)
                   (1) edge (3)
                   ;
         \node [xshift=1.5cm, yshift=-0.4cm] (E) {
           \begin{tabular}{l}
             $\embeq$ \\[1ex] $\sembeq$
           \end{tabular} 
         };
         \node [n, xshift= 3cm]    (1) {\nn{\f}{I}};
         \node [n, xshift=0.75cm, %
         below left = of 1]  (2) {\nn{\fb}{II}};
         \node [n, xshift=-0.75cm, %
         below right = of 1] (3) {\nn{\fa}{III}};
         \path[->] (1) edge (2)
         (1) edge (3)
         ;
       \end{tikzpicture} \\[1ex]
       (b) Valid after Definition~\ref{def:embeddingFstAttempt}
         and Definition~\ref{def:embeddingSndAttempt}
     \end{center} 
   \end{minipage}
    }{ The term graph representing f(a,b) is embedded in the term graph
    representing f(b,a) and vice versa.}
\\[3ex]
  \pdftooltip{ %
  \begin{minipage}[b]{.49\linewidth}
    \begin{center}
      \begin{tikzpicture}
        \node [n]                  (1) {\nn{\f}{1}};
        \node [n, xshift=0.75cm, %
               below left = of 1]  (2) {\nn{\g}{2}};
        \node [n, xshift=-0.75cm, %
               below right = of 2] (3) {\nn{\g}{3}};
        \node [n,
               below = of 3]       (4) {\nn{\fa}{4}};
        \path[->] (1) edge (2)
                  (1) edge [bend left]  (3)
                  (2) edge (3)
                  (3) edge (4)
                  ;
        \node [xshift=1.4cm, yshift=-0.6cm] (E) {$\sembeq$};
        \node [n, xshift= 3.25cm]    (1) {\nn{\f}{A}};
        \node [n, xshift=0.5cm, %
               below left = of 1]  (2) {\nn{\g}{B}};
        \node [n, xshift=-0.5cm, %
               below right = of 1] (3) {\nn{\g}{C}};
        \node [n, xshift=-0.75cm,
               below = of 3] (4) {\nn{\fa}{D}};
        \path[->] (1) edge (2)
                  (1) edge (3)
                  (2) edge (4)
                  (3) edge (4)
                  ;
      \end{tikzpicture}
      \\[1ex]
      (c) Demonstration of  Definition~\ref{def:embeddingFinal}   
    \end{center}
  \end{minipage}
  }{ 
    The term graph f:1(g:2(x),x) where x= g:3(a:4) embeds the term graph
    f:A(g:B(y),g:C(y)) where y = a:D.}
&
  \pdftooltip{
  \begin{minipage}[b]{.49\linewidth}
    \begin{center}
      \begin{tikzpicture}
        \node [n]                  (1) {\nn{\f}{1}};
        \node [n, xshift=-0.5cm, %
               below right = of 1] (2) {\nn{\g}{2}};
        \node [n, xshift=-0.5cm,
               below = of 2]       (3) {\nn{\fa}{3}};
        \path[->] (1) edge (2) 
                  (1) edge [bend right] (3)
                  (2) edge (3)
                  ;
        \node [xshift=2cm, yshift=-0.8cm] (E) {
          \begin{tabular}{l}
            $\sembeq$ \\[1ex] $\embeq$
          \end{tabular} 
        };
        \node [n,xshift = 4cm]      (A) {\nn{\f}{A}};
        \node [n, xshift=0.5cm, %
               below left = of A]   (B) {\nn{\g}{B}};
        \node [n, xshift=0.5cm, %
               below = of B]        (C) {\nn{\fa}{C}};
        \path[->] (A) edge (B)
                  (B) edge (C)
                  (A) edge [bend left] (C)
                  ;
      \end{tikzpicture} \\[1ex]
      (d) Valid after Definition~\ref{def:embeddingFinal}
    \end{center}
  \end{minipage}
  }{ 
    Two term graphs are mutually embedded. The first one is f:1(x,g:2(x))
    where x = a:3, and the other one is f:A(g:B(y),y), where y = a:C.
  }%
\end{tabular}
\caption{Variants of Embedding}
\label{fig:variants}
\end{figure}

  \begin{example}
   Again consider Figure~\ref{fig:ex:embedding}. From the first to
  the second term dag we have a function $m$, with
   $m(\cial{1}) = \cial{A}$, $m(\cial{2}) = \cial{B}$, and/or
   $m(\cial{3}) = \cial{B}$. Here $m$ is not unique. From the second to
   the third term dag the morphism $m\pr$ maps
   $m\pr(\cial{A}) = \cial{I}$ and $m\pr(\cial{B}) = \cial{II}$.
  \end{example}

  One may observe, that both definitions of embedding so far are very
  permissive: it does not regard the order of the arguments. This is
  best illustrated by an example. 

\begin{example} \label{ex:fabfba}
  The two term graph shown in Figure~\ref{fig:variants}(b) 
  representing the terms $\f(\fa,\fb)$ and
  $\f(\fb,\fa)$ are mutually embedded: %
    from left to right we have the morphism $m$ with
    $m(\cial{1}) = \cial{I}$, $m(\cial{2}) = \cial{III}$, and
    $m(\cial{3}) = \cial{II}$. But---the inverse morphism $m^{-1}$
    also fulfils both conditions in
    Definition~\ref{def:embeddingFstAttempt} and
    Definition~\ref{def:embeddingSndAttempt}.
\end{example}

To remedy this, we need to take the order of the arguments into
account. Informally speaking, we want the preserve the relative order
between the nodes: if a node~$n$ is ``left of'' a node~$n\pr$, $m(n)$
should be ``left of'' $m(n\pr)$ in the embedded graph. For a formal
description of ``left of'', we employ \emph{positions}. Positions are
sequences of natural numbers with $\cdot$ as delimiter. The set of
positions of a node~$n$ in a term graph~$S$ is defined as follows:
$\Pos_S(n) \df \{\epsilon \}$ if $n = \rt(S)$, 
and
$\Pos_S(n) \df \{ p \cdot i \mid \exists n\pr \in S \text{ with } n\pr
\succi{i}_S n \text{ and } p \in \Pos_S(n\pr) \}$ otherwise.
For a term dag $G$ with $\inlets_G$, the base case is adapted
slightly: $\Pos_G(n) \df \{ i \} $ if $n$ is on $i$th position in
$\inlets_G$.
We can now compare two positions $p$ and $q$: $p$ is left---or
above---of $q$, if $p = p_1 \cdots p_k \lex q_1 \cdots q_l = q$, i.e.\
$p_i = q_i$ for $1 \leqslant i \leqslant j$ and $j = k < l$ or
$p_j < q_j$.

We now have to extend this comparison from positions to nodes. This
entails on the one hand an intra-node comparison which finds the
smallest position within a node. Then an inter-node comparison
comparing the smallest positions of two nodes.  Two nodes are called
\emph{parallel} in a term graph $G$, if they are mutually unreachable.

\begin{definition}
  Let $G$ be a term dag. We define a partial order $\lof_G$ on the
  parallel nodes in $G$. Let $n, n\pr \in G$ and suppose $n$ and
  $n\pr$ are parallel.  Further, suppose $p \in \Pos(n)$ is minimal
  wrt.\ $\lex$ and $q \in \Pos(n\pr)$ is minimal wrt.\ $\lex$.  Then
  $n \lof_G n\pr$ if $p \lex q$.
\end{definition}

Based on the above definition, we develop
Definition~\ref{def:embeddingSndAttempt} further to the final version
of embedding.

\begin{definition}[final] \label{def:embeddingFinal}
  Let $\topembeq$ be a precedence. 
  We say that $S$ \emph{embeds} $T$, denoted as $S \sembeq T$, if
  there exists a partial, surjective function $m \colon S \to T$, such
  that for all nodes $s$ in the domain of~$m$, holds
  \begin{enumerate}[label=\itshape(\roman*)]
  \item \label{embed:1}
    $\Top_T(m(s)) \topembeq \Top_S(s)$, and
  \item \label{embed:2} $m(s) \succi{}_T m(s\pr) $ implies 
    $ s \succi{}^{+}_S n\pr$ for some $ n\pr \in \{ n \mid m(n) = m(s\pr) \}$, and
  \item \label{embed:3} $m(s) \lof_T m(s\pr)$ implies either
    that none of the nodes in the preimage of $m(s\pr)$ is parallel to $s$, or
    there exists $n\pr \in \{ n \mid m(n) = m(s\pr) \}$ such that 
    $s \lof_S n\pr$.
   \end{enumerate}
\end{definition}

\begin{example}
  Recall Example~\ref{ex:fabfba}. With the final definition of
  embedding, the two term graphs are not mutually embedded per
  se---embedding now depends on $\topembeq$. 
  As a further example for the embedding of the term graphs
  consider Figure~\ref{fig:variants}(c). We have the following
  morphism: $\mm{1}{A}$, $\mm{2}{B}$, $\mm{3}{C}$, and
  $\mm{4}{D}$. Here we have $\cial{B} \lof \cial{C}$ but $\cial{2}$
  and $\cial{3}$ are not parallel.
  However, even with $\lof$ the two graphs in
  Figure~\ref{fig:variants}(d) are mutually embedded. Here we have
  neither $\cial{2} \lof \cial{3}$ nor $\cial{B} \lof \cial{C}$, so
  Condition~\ref{embed:3} holds trivially in both directions.
\end{example}

The relation $\sembeq$ is transitive, i.e.\ $S \sembeq T$ and
$T \sembeq U$ implies $S \sembeq U$. The proof is straight forward: We
construct the embedding $m_3 : S \to U$, based on the implied
embeddings $m_2 : S \to T$ and $m_1: T \to U$, by setting
$m_3(n) = m_1(m_2(n))$ and show that $m_3$ fulfils the conditions in
Definition~\ref{def:embeddingFinal}. 

\section{Kruskal's Tree Theorem for Acyclic Term Graphs}
\label{Kruskal}
Our proof follows \cite{1997_middeldorp_et_al} for the term rewrite
setting, which in turn follows the minimal bad sequence argument of
Nash-Williams \cite{1963_nash-williams}: we assume a minimal ``bad''
infinite sequence of term graphs and construct an even smaller ``bad''
infinite sequence of their arguments. By minimality we contradict that
this sequence of arguments is ``bad'', and conclude that it is
``good''. So we start by defining the notions of ``good'' and ``bad''.

\begin{definition}
  Assume a reflexive and transitive order $\wqo$, and an infinite
  sequence~$\asq$ with $a_i, a_j$ in $\asq$. If for some $i < j$ we
  have $a_i \wqo a_j$, then $\asq$ is \emph{good}. Otherwise,
  $\asq$ is \emph{bad}. If every infinite sequence is good, then
  $\wqo$ is a \emph{well-quasi order} (wqo).
\end{definition}
After we determined the sequence of arguments to be good, we want
to--- roughly speaking---plug the $\Top$ back on its argument. For
this, we need a wqo on $\Tops(\F)$ and the following, well
established, lemma.
\begin{lemma} \label{Lem:Chain} %
  If $\wqo$ is a wqo then every infinite sequence contains a
  subsequence---a \emph{chain}---with $a_i \wqo a_{i+1}$ for all $i$.
\end{lemma}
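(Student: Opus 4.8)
The plan is to give a direct argument in the spirit of the minimal bad sequence technique, without invoking Ramsey's theorem (which, as I note at the end, yields an equally short alternative). Fix an infinite sequence $\asq = a_0, a_1, a_2, \ldots$ and call an index $i$ \emph{terminal} if there is no $j > i$ with $a_i \wqo a_j$.

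First I would show that only finitely many indices are terminal. Suppose, for contradiction, that there are infinitely many; enumerate them as $i_0 < i_1 < i_2 < \cdots$. For $k < l$ we have $i_l > i_k$, and since $i_k$ is terminal there is \emph{no} $j > i_k$ with $a_{i_k} \wqo a_j$; in particular $a_{i_k} \not\wqo a_{i_l}$. Hence the subsequence $a_{i_0}, a_{i_1}, a_{i_2}, \ldots$ is bad, contradicting the assumption that $\wqo$ is a wqo.

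Next, let $N$ be strictly larger than every terminal index (such an $N$ exists by the previous step; take $N \df 0$ if there are none). Then for every $i \geqslant N$ the index $i$ is not terminal, so there is some $j > i$ with $a_i \wqo a_j$. Define a subsequence recursively: set $k_0 \df N$, and given $k_t \geqslant N$, pick $k_{t+1} > k_t$ with $a_{k_t} \wqo a_{k_{t+1}}$, which is possible precisely because $k_t$ is not terminal; note $k_{t+1} > k_t \geqslant N$, so the recursion continues. The resulting subsequence $a_{k_0}, a_{k_1}, a_{k_2}, \ldots$ satisfies $a_{k_t} \wqo a_{k_{t+1}}$ for all $t$, which is the desired chain.

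The only real content is the first step---recognising that, unless finitely many indices are terminal, one can carve a bad sequence out of the terminal indices---and this is exactly where well-quasi-orderedness is used; the recursive extraction afterwards is routine. (Alternatively: colour each pair $\{i < j\}$ by whether or not $a_i \wqo a_j$, apply the infinite Ramsey theorem to obtain an infinite monochromatic set $I = \{i_0 < i_1 < \cdots\}$, observe that the ``bad'' colour would make $a_{i_0}, a_{i_1}, \ldots$ a bad sequence and is therefore excluded, and read off $a_{i_t} \wqo a_{i_{t+1}}$ from the ``good'' colour.)
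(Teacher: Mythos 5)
The paper states this lemma without proof, calling it ``well established,'' so there is no in-paper argument to compare against; your proof stands on its own and is correct. The two-stage structure---first showing that only finitely many indices are \emph{terminal}, since otherwise the terminal indices would themselves index a bad subsequence, and then greedily extracting a chain past the last terminal index---is the standard direct argument, and the Ramsey-theoretic alternative you sketch in parentheses is equally valid. Note also that your argument uses only the paper's defining property of a wqo (every infinite sequence is good) and does not actually need reflexivity or transitivity of $\wqo$, which only strengthens it.
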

With this lemma, we can construct witnesses that our original minimal
bad sequence of term graphs is good, contradicting its badness and
concluding the following theorem.
\begin{theorem} \label{t:kruskal} %
  If $\topembeq$ is a wqo on $\Tops(\F)$, then $\embeq$ is a wqo on
  ground, acyclic term graphs.
\end{theorem}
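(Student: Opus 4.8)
The plan is to follow Nash-Williams' minimal bad sequence argument, adapted from the term-rewriting proof of \cite{1997_middeldorp_et_al}. Suppose, for contradiction, that $\embeq$ is not a wqo on ground acyclic term graphs. Then there is a bad infinite sequence. We build a \emph{minimal} bad sequence $T_1, T_2, \dots$ by choosing, at each step $i$, $T_i$ to be a term graph of minimal size $|T_i|$ such that $T_1, \dots, T_i$ extends to some bad infinite sequence; this is well-defined since $\lvert\cdot\rvert$ is $\mathbb{N}$-valued. Each $T_i$ must be non-trivial (it cannot consist of a single node, since finitely many single-node term graphs over $\F$ up to isomorphism would force a repetition, hence goodness via Condition~\ref{it:embTopRef} of a precedence — actually since $\topembeq$ is a wqo, any infinite sequence of one-node graphs is good and supplies the needed comparison). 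So each $T_i$ has a well-defined top $\Top_{T_i}(\rt(T_i))$ and a non-empty argument graph $A_i \df T_i \sub \inlets\pr$ (using Definition of the argument graph), whose nodes all have strictly smaller size than $T_i$.

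Next I would assemble the sequence of all arguments. Collect the multiset/sequence $\mathcal{A}$ of the argument graphs $A_1, A_2, \dots$ — more precisely, following Nash-Williams, consider the set of all term graphs appearing as $A_i$ (or as the components reachable from individual inlets); order them by the index of the $T_i$ they came from. The key \emph{minimality} observation is: no infinite subsequence of $\mathcal{A}$ indexed by $i_1 < i_2 < \cdots$ can be bad with respect to $\embeq$. For if $A_{i_1}, A_{i_2}, \dots$ were bad, then $T_1, \dots, T_{i_1 - 1}, A_{i_1}, A_{i_2}, \dots$ would be a bad sequence (a $\sembeq$-comparison of two arguments would lift to one of the enclosing $T$'s — this is the step that needs the embedding definition to be compatible with taking arguments), contradicting the minimal choice of $T_{i_1}$, because $\lvert A_{i_1}\rvert < \lvert T_{i_1}\rvert$. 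Hence the sequence of arguments is good, i.e.\ $\embeq$ restricted to it is a wqo.

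From goodness of the arguments plus Lemma~\ref{Lem:Chain} I would extract, by a pigeonhole/Ramsey-type argument (or Higman's lemma on sequences of arguments, since $\inlets\pr$ is an \emph{ordered} list), an infinite subsequence $T_{j_1}, T_{j_2}, \dots$ along which (a) the arguments embed compatibly in a way that respects the left-to-right order $\lof$, and (b) by Lemma~\ref{Lem:Chain} applied to the tops — which live in $\Tops(\F)$ where $\topembeq$ is assumed to be a wqo — we have $\Top_{T_{j_k}}(\rt) \topembeq \Top_{T_{j_{k+1}}}(\rt)$ for all $k$, say with $k=1,2$ sufficing: $\Top_{T_{j_1}} \topembeq \Top_{T_{j_2}}$. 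Then I would \emph{glue}: define the partial surjection $m : T_{j_2} \to T_{j_1}$ that sends $\rt(T_{j_2})$ to $\rt(T_{j_1})$, and on the argument part is assembled from the witnessing embeddings of the individual argument components, wired so that Conditions~\ref{embed:1}–\ref{embed:3} of Definition~\ref{def:embeddingFinal} all hold — \ref{embed:1} at the root from $\Top_{T_{j_1}} \topembeq \Top_{T_{j_2}}$ and below from the sub-embeddings; \ref{embed:2} because every $\rt(T_{j_2}) \succi{} \cdot$ step lands in the inlets and the argument-embedding provides a descendant witness; \ref{embed:3} from the order-compatibility extracted in (a). This yields $T_{j_1} \sembeq T_{j_2}$, i.e.\ $T_{j_2} \embeq T_{j_1}$ with $j_1 < j_2$, contradicting badness.

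The main obstacle — and the place where the acyclic-term-graph setting genuinely departs from the classical term case — is the gluing step together with the compatibility-with-arguments step that feeds it. In the term setting one simply has disjoint subterms, but here argument graphs can \emph{share} nodes, and the embedding witnesses for different inlets may be forced to agree (or disagree) on shared nodes; one must show the individual embeddings can be chosen coherently and that the resulting global $m$ stays surjective and honours the parallel-node condition~\ref{embed:3} (recall from the discussion after Definition~\ref{def:embeddingFinal} that $\lof$ is only a partial order on \emph{parallel} nodes, so care is needed about which pairs of argument nodes are actually comparable). A secondary subtlety is the correct form of the minimality lemma: because arguments share structure, ``the sequence of arguments'' must be set up so that a bad subsequence of it really does splice back into a bad sequence of term graphs of the required sizes; getting this bookkeeping right — e.g.\ working with the components reachable from each individual inlet, or using a suitable notion of ``argument occurring in $T_i$'' — is where I expect the proof to need the most care.
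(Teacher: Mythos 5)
Your overall strategy is the same as the paper's: a minimal bad sequence $T_1, T_2, \dots$ chosen by size, a minimality argument showing that the associated arguments are well-quasi-ordered by $\embeq$, Lemma~\ref{Lem:Chain} applied to the tops in $\Tops(\F)$, and a final gluing of a top comparison with an argument comparison into $T_{\phi(i)} \embeq T_{\phi(j)}$. The minimality step and the case analysis behind it are right as you sketch them; note only that the cross-case needing an extra fact is a $T_a$ compared against an argument (which lifts because each argument graph embeds into its enclosing term graph), not a comparison of two arguments, which contradicts the assumed badness of the argument sequence directly.

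The one genuine gap is exactly the point you flag as ``the main obstacle'': you oscillate between taking the arguments of $T_i$ to be the single argument graph $A_i = T_i \sub \inlets\pr$ and decomposing them into per-inlet components, and in the gluing step you work with ``the witnessing embeddings of the individual argument components'', which forces you to worry about coherence on shared nodes, about Higman's lemma, and about Condition~\ref{embed:3}. The paper dissolves all of this by committing to the first reading: the argument of $T_i$ is \emph{one} term dag $G_i$, carrying its own $\inlets$ (which fixes the positions used by $\lof$), the wqo is established on the collection of these single graphs, and a comparison $G_{\phi(i)} \embeq G_{\phi(j)}$ is already one global morphism $m_G$ that respects sharing and argument order across all inlets simultaneously. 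The gluing is then just $m = m_G$ extended by $\rt(T_{\phi(j)}) \mapsto \rt(T_{\phi(i)})$; Condition~\ref{embed:1} at the root is the top comparison obtained from the chain, Condition~\ref{embed:2} at the root holds because every node of $G_{\phi(j)}$ is reachable from $\inlets_{G_{\phi(j)}}$ and hence from $\rt(T_{\phi(j)})$, and Condition~\ref{embed:3} is inherited from $m_G$. No Ramsey-type extraction and no coherence argument for per-inlet embeddings is needed. If you keep the per-component formulation you would have to prove the coherence claim you only gesture at, and it is not clear it holds; adopting the single-argument-graph formulation (which the paper's conclusion singles out as the decisive simplification) closes the gap.
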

\begin{proof}
  By definition, $\embeq$ is a wqo, if every infinite sequence is
  good, i.e.\ for every infinite sequence of term graphs, there are
  two term graphs $T_i, T_j$, such that $T_i \embeq T_j$ with
  $1 \leqslant i < j$.
  We construct a minimal bad sequence of term graphs $\mbsq$: Assume
  we picked $T_1, \ldots, T_{n-1}$. We next pick $T_n$---minimal with
  respect to $|T_n|$---such that there are bad sequences that start
  with $T_1, \ldots, T_n$.

  Let $G_i$ be the argument graph of the $i$th term graph $T_i$. We
  collect in $G$ the arguments of all term graphs of $\mbsq$, i.e.
  $G = \bigcup_{i \geqslant 1} G_i$ and show that $\embeq$ is a wqo on
  $G$.
  For a contradiction, we assume $G$ admits a bad sequence
  $\argsq$. We pick $G_k \in G$ with $k \geqslant 1$ such that
  $H_1 = G_k$. In $G\pr$ we collect all argument graphs up to $G_k$,
  i.e.  $G' = \bigcup_{i \geqslant 1}^k G_i$. The set $G\pr$ is finite,
  hence there exists an index $l > 1$, such that for all $H_i$ with
  $i \geqslant l$ we have that $H_i \in G$ but $H_i \not\in G\pr$.  We
  write $\argsq_{\geqslant l}$ for the sequence $\argsq$ starting at
  index~$l$.
  Now consider the sequence
  $T_1, \ldots, T_{k-1}, G_k, \argsq_{\geqslant l}$. By minimality of
  $\mbsq$ this is a good sequence. So we try to find a witness and
  distinguish on $i, j$:
  
  \noindent
  \begin{tabularx}{\textwidth}{l X}
  $\underbrace{ T_1, \ldots, T_{k-1}}_{i,j}, G_k, \argsq_{\geqslant l}$ &%
   For $1 \leqslant i < j \leqslant k-1$, we have
   $T_i \embeq T_j$, which contradicts the badness of~$\mbsq$.\\
   $\underbrace{ T_1, \ldots, T_{k-1}}_{i}, \underbrace{G_k}_{j},
   \argsq_{\geqslant l} $ &%
   For $1 \leqslant i \leqslant k-1$ and $j = k$, we have
   $T_i \embeq G_k$ and $G_k \embeq T_k$, where the latter is a direct
   consequence of the definitions. Hence, by transitivity,
   $T_i \embeq T_j$, which contradicts the badness of
   $\mbsq$. \\
   $\underbrace{ T_1, \ldots, T_{k-1}}_{i}, G_k,
   \underbrace{\argsq_{\geqslant l}}_{j}$ &%
   For $1 \leqslant i \leqslant k-1$ and $j \geqslant l$, we have
   $H_j \not\in G\pr$ by construction, but then $H_j = G_m$ for some $m > k$ 
   and thus $H_j \embeq T_m$. Together with $T_i \embeq H_j$, we obtain
   $T_i \embeq T_m$ by transitivity, which contradicts the
   badness of $\mbsq$. \\
  %
   $T_1, \ldots, T_{k-1},  \underbrace{G_k, \argsq_{\geqslant l}}_{i,j}$ &%
   Hence for some $1 \leqslant i < j$, where
   $i,j \not\in \{2, \ldots, l-1\}$, we have some $H_i \embeq H_j$, which
   contradicts the badness of $\argsq$.
  \end{tabularx}
  \noindent
  We conclude $\argsq$ is a good sequence and $\embeq$ is wqo on $G$.
  
  Next we consider the $\Top$s of $\mbsq$. Let these $\Top$s be
  $\topf$. By assumption, $\topembeq$ is a wqo on $\Tops(\F)$, and by
  Lemma~\ref{Lem:Chain}, $\topf$ contains a chain $\topf_\phi$, i.e.
  $f_{\phi(i)} \topembeq f_{\phi(i+1)}$ for all $i \geqslant 1$. We
  proved $\embeq$ to be a wqo on $G$. Hence we have
  $G_{\phi(i)} \embeq G_{\phi(j)}$ for some $1 \leqslant i < j$.
  It remains to be shown, that $f_{\phi(i)} \topembeq f_{\phi(j)}$ and
  $G_{\phi(i)} \embeq G_{\phi(j)}$ implies
  $T_{\phi(i)} \embeq T_{\phi(j)}$.
  We construct $T_{\phi(i)}$, and analogous $T_{\phi(j)}$, from
  $f_{\phi(i)} = (n_i, \lb_{f\phi(i)}, \succs_{f\phi(i)})$ and
  $G_{\phi(i)} = (N_{G\phi(i)}, \lb_{G\phi(i)}, \succs_{G\phi(i)})$
  with $\inlets_{G\phi(i)}$.  We have
  ${{N_{G\phi(i)}} \cap \{n_i\}} = \varnothing $.  Then
  $T_{\phi(i)} = (N_{T\phi(i)}, \lb_{T\phi(i)}, \succs_{T\phi(i)})$
  where
  \begin{enumerate}[label=\itshape(\roman*)]
  \item  the nodes $N_{T\phi(i)} \df {N_{G\phi(i)}} \cup \{n_i \}$,
  \item $\lb_{T\phi(i)} \df \lb_{G\phi(i)}$ extended by
    $\lb_{T\phi(i)}(n_i) = \lb_{f\phi(i)}(n_i)$, and
  \item $\succs_{T\phi(i)} \df \succs_{G\phi(i)}$ extended by
    $\succs_{T\phi(i)} (n_i) = \inlets_{G\phi(i)}$.
  \end{enumerate}
  
  We aim for $T_{\phi(i)} \embeq T_{\phi(j)}$ and therefore construct
  the morphism $m: T_{\phi(j)} \to T_{\phi(i)}$.  From
  $G_{\phi(i)} \embeq G_{\phi(j)}$, we obtain a morphism
  $m_G : G_{\phi(j)} \to G_{\phi(i)}$. We set $m(n) = m_G(n)$ for
  $n \in G_{\phi(j)}$, and $m(n_j) = n_i$.
  It remains to be shown that $m$ fulfils
  Definition~\ref{def:embeddingFinal}. Surjectivity of $m$ follows directly
  from the surjectivity of $m_G$. Condition~\ref{embed:1} holds for
  all nodes in $m_G$, and by $f_{\phi(i)} \topembeq f_{\phi(j)}$ also
  for $\rt(T_{\phi(j)}) = n_j$. For Condition~\ref{embed:2} we have to
  show:
  If $m(n_j) \succi{}_{T_{\phi{i}}} n\pr_i = m(n\pr_j)$ then
  $n_j \succi{}^{+} n\pr_j$.
  By definition $n\pr_i \in \inlets_{G\phi(i)}$ and hence also
  $n\pr_i \in G_{\phi(i)}$. By surjectivity of $m_G$ exist
  $m_G(n\pr_j) = n\pr_i$. It remains to be shown that
  $n_j \succi{}^+ n\pr_j$.
  By definition $n_j \succi{} u_j$, where
  $u_j \in \inlets_{G\phi(j)}$. By definition of argument graph, all
  nodes in $G_{\phi(j)}$ are reachable from nodes in
  $\inlets_{G\phi(j)}$, and in particular
  $n_j \succi{} u_j \succi{}^* n\pr_j$. 
  Finally, Condition~\ref{embed:3} holds trivially for $n$ and by
  $G_{\phi(i)} \embeq G_{\phi(j)}$.
  Hence we found a $T_{\phi(i)} \embeq T_{\phi(j)}$, which contradicts
  the badness of $\mbsq$. Therefore $\mbsq$ is good and~$\embeq$ is a
  wqo.
\end{proof}

\section{Simplification Orders}
\label{Simplification:Orders}

In the term rewriting setting simplification orders are defined
through the embedding relation. That is, a rewrite order~$\simpo$ is a
\emph{simplification order} if ${\emb} \subseteq {\simpo}$
\cite{1997_middeldorp_et_al}. Then, if we can orient the rules in a
rewrite system with $\simpo$, there are no infinite rewrite sequences.
We try to directly transfer this idea to the term graph rewriting
setting---but this is not sufficient, as the following example shows.
\begin{example}
  We can orient the rule on the left with $\semb$, but still may get
  an infinite rewrite sequence, as shown on the right.
\begin{center}
 \pdftooltip{
  \begin{tikzpicture}
     \femb{\fa}{\fa}{1}
    \node [xshift=2cm, yshift=-0.4cm] (E1) {$\semb$};
    \fembs{\fa}{3}
    \femb{\fa}{\fa}{6}
    \node [xshift=7cm, yshift=-0.4cm] (E) {$\grw$};
    \fembs{\fa}{8}
    \node [xshift=9cm, yshift=-0.4cm] (E2) {$\grw$};
    \fembs{\fa}{10}
    \node [xshift=11cm, yshift=-0.4cm] (E2) {$\ldots$};
  \end{tikzpicture}
}{ On the left we have the rule: first, on the left hand side is the
  tree representation of the term f(a,a), second, on the right hand
  side, the subterm a is shared. Here the left hand side strictly
  embeds the right hand side. So on the right we have an infinite
  rewrite sequence. The rewrite sequence starts with the tree
  representation of f(a,a), and performs one step to share the term a.
  But then the rule is applicable again, and again, ...
}
\end{center}
Note, that this infinite rewrite sequence is not bad wrt.\
$\embeq$.
\end{example}

This problem is \emph{not} caused by our definition of embedding, and
also occurs in \cite{1997_plump}. Rather, the reason is that from
orientation of the rules, we cannot conclude orientation of all
rewrite steps.
However, it should be noted, that the definition of simplification
order in~\cite{1997_plump} is indeed transferable to our presentation.

\begin{definition}[\cite{1997_plump}]
\label{d:simpo}
  Let $\embeq$ be the embedding relation induced by a precedence $\topembeq$ 
  that is a wqo. A transitive relation $\simpo$ is a \emph{simplification
    order}, if
  \begin{enumerate} [label=\itshape(\roman*)]
  \item $\emb \subset \simpo$, and
  \item for all $S$ and $T$, if $S \embeq T$ and $T \embeq S$ then
    $S \not\simpo T$.
  \end{enumerate}
\end{definition}

A direct consequence of the second condition is that simplification
orders are irreflexive. We obtain the following theorem.
\begin{theorem}
Every simplification order is well-founded.  
\end{theorem}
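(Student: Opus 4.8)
The plan is to derive well-foundedness from Theorem~\ref{t:kruskal} (Kruskal's Tree Theorem) via the standard "simplification order" argument, now carried out in the term graph setting. Suppose for contradiction that a simplification order $\simpo$ admits an infinite descending sequence $T_1 \simpo T_2 \simpo T_3 \simpo \cdots$ of ground, acyclic term graphs. Since $\topembeq$ is a wqo on $\Tops(\F)$ by hypothesis of Definition~\ref{d:simpo}, Theorem~\ref{t:kruskal} tells us that the induced embedding $\embeq$ is a wqo on ground, acyclic term graphs. Hence the sequence $\mbsq = T_1, T_2, \ldots$ is \emph{good}: there exist indices $i < j$ with $T_i \embeq T_j$.

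The next step is to split on whether the embedding is "strict" or "symmetric". If $T_j \embeq T_i$ fails, then from $T_i \embeq T_j$ we get $T_i \emb T_j$ (recall $\emb$ is the strict part $\embeq$, i.e.\ $S \emb T$ iff $S \embeq T$ and not $T \embeq S$), so by Condition~(i) of Definition~\ref{d:simpo}, $T_i \simpo T_j$. On the other hand, the infinite descending chain gives $T_i \simpo T_{i+1} \simpo \cdots \simpo T_j$, and since $\simpo$ is transitive this yields $T_j \simpo T_i$; together with $T_i \simpo T_j$ and irreflexivity (a consequence of Condition~(ii) applied with $S = T = T_i$, using reflexivity of $\embeq$) we reach a contradiction. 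If instead $T_j \embeq T_i$ \emph{also} holds, then $T_i$ and $T_j$ are mutually embedded, so Condition~(ii) of Definition~\ref{d:simpo} gives $T_i \not\simpo T_j$; but transitivity of $\simpo$ along $T_i \simpo \cdots \simpo T_j$ gives $T_i \simpo T_j$, again a contradiction. Either way the assumed infinite descending sequence cannot exist, so $\simpo$ is well-founded.

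I expect the only real subtlety to be bookkeeping around the two conditions of Definition~\ref{d:simpo}: one must be careful that Condition~(i) is stated for the \emph{strict} embedding $\emb$, so the case analysis on whether the converse embedding holds is genuinely needed, and one must extract irreflexivity of $\simpo$ from Condition~(ii) (the paper already remarks on this) to close the strict case. No properties of term graphs beyond those packaged in Theorem~\ref{t:kruskal} are used; the argument is purely order-theoretic once the wqo is in hand.
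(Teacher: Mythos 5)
Your proposal is correct in substance and takes essentially the same route as the paper: the paper's own proof is a two-line compression of exactly this argument (invoke Theorem~\ref{t:kruskal} to get that $\embeq$ is a wqo, then use that an irreflexive, transitive relation satisfying the two conditions of Definition~\ref{d:simpo} admits no infinite chain), and your version has the merit of unpacking the order-theoretic step and of making explicit that \emph{both} conditions are needed --- condition \emph{(ii)} is indispensable, since an irreflexive transitive extension of the strict part $\emb$ alone need not be well-founded (take the trivial wqo where all elements are mutually embedded, so that $\emb$ is empty). One bookkeeping slip to repair: with the paper's orientation ($S \emb T$ implies $S \simpo T$, read ``$S$ is below $T$''), an infinite descending sequence is $T_2 \simpo T_1$, $T_3 \simpo T_2$, \ldots, so that transitivity gives $T_j \simpo T_i$ for $i < j$; as you have written the chain ($T_1 \simpo T_2 \simpo \cdots$), transitivity gives $T_i \simpo T_j$, and your line ``this yields $T_j \simpo T_i$'' does not follow --- indeed in that orientation the strict case produces no contradiction at all, since condition \emph{(i)} and the chain then assert the same inequality. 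Reversing the orientation of the assumed chain fixes this, and both cases then close exactly as you describe.
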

\begin{proof}
  Let $\succ$ denote a simplification order.  Thus there exists a
  well-quasi ordered precedence and an induced embedding relation,
  such that its strict part $\emb$ is contained in $\succ$. Due to
  Theorem~\ref{t:kruskal}, $\emb$ is a well-quasi order. Further, by
  definition $\succ$ is an irreflexive and transitive extension of
  $\embeq$. Thus $\succ$ is well-founded.
\end{proof}

Based on this observation, we adapt the definition of a
\emph{lexicographic path order} (\emph{LPO} for short) from term
rewriting to term graph rewriting and thus have a technique to show
  termination directly for acyclic term graph rewriting. Based on the above definition
of embedding, it is natural to define LPO on term dags. Thus, we
obtain the following definition of $\lpo$ induced by a well-quasi
ordered precedence.

\begin{definition}
Let $\topembeq$ be a well-quasi ordered precedence. We
write $\toplex$ for the lexicographic extension of $\sqsubset$.
  Let $S, T$ be term dags with $\inlets_S = [s_1, \ldots, s_k]$ and
  $\inlets_T = [t_1, \ldots, t_k]$, where $s_i$, $s_j$ and $t_i$,
  $t_j$ are parallel. Then $T \lpo S$ if one of the following holds
  \begin{enumerate}[label=\itshape(\roman*)]
  \item \label{lpo:1} %
    $T \lpoeq S \sub [s_{i_1}, \ldots, s_{i_{k'}}]$ for some
    $1 \leqslant i_1 < \ldots < i_{k'} \leqslant k$, or
  \item \label{lpo:2} %
    $[\Top(t_1), \ldots, \Top(t_l)] \toplex [\Top(s_1), \ldots,
    \Top(s_k)]$ and $\args(T) \lpo S$, or
  \item \label{lpo:3} %
    $[\Top(t_1), \ldots, \Top(t_l)] = [\Top(s_1), \ldots,
    \Top(s_k)]$ and $\args(T) \lpo \args(S) $.
  \end{enumerate}
\end{definition}

\begin{example}
  Recall Example~\ref{ex:fabfba}. Given the precedence
  $\fa \topemb \fb$ we can orient the two term graphs: from right to
  left. To orient the term graphs wit $\lpo$ we first use \ref{lpo:3}
  and compare the argument graphs. Then we compare their respective
  $\inlets$ lexicographically, i.e.,
  $[\Top(\cial{2}), \Top(\cial{3})] \toplex [ \Top(\cial{II}),
  \Top(\cial{III})]$ using \ref{lpo:2}.
\end{example}
To prove that $\lpo$ contains $\embeq$ for term graphs, it important
to note that $\lpo$ requires that nodes are parallel within
$\inlets$. That means, we can inductively step through a term graph,
with $\inlets$ forming a level in the term graph. With~\ref{lpo:1} we
can project the largest term dag to the dag that is actually used in
the embedding.

\section{Conclusion and Discussion}
\label{Conclusion}

Inspired by~\cite{1997_plump} we defined an embedding relation for the
term graph rewriting flavour of \cite{2013_avanzini,AM:2016} and 
re-proved Kruskal's Tree Theorem. Furthermore, based on Plump's work~\cite{1997_plump}
we establish a new notion of simplification order for acyclic term graphs
and provide a suitable adaption of the lexicographic path order to acyclic
term graphs.

In contrast to~\cite{1997_plump}, where the proof
 uses an encoding of $\Top$ to
 function symbols with different arities, 
 our proof operates on term graphs. 
With a new definition of the embedding relation, based on the
notion of morphism and taking sharing into account, and a new definition
of arguments we finally showed Kruskal's Tree Theorem for term graphs:
A well-quasi order on $\Tops$, i.e.\ $\topembeq$, induces a well-quasi
order~$\embeq$ on ground term graphs.
One insight from our proof concerns the arguments of a
term graph---or rather \emph{the} argument. For a term structure we
have several subterms as arguments. For a term graph structure it is
beneficial to regard the arguments as only one single argument
graph. This preserves sharing. Moreover a single argument simplifies
the proof as extending the order to sequences, Higman's
Lemma~\cite{1952_higman}, can be omitted. 

In future work, we will focus on the establishment of genuinely novel
notions of simplification orders for term graph rewriting and investigate
suitable adaptions of reduction orders for complexity analysis. 

\bibliographystyle{eptcs}
\bibliography{biblio}
\end{document}